\newtheorem{lemma}{Lemma}
\newtheorem{defi}{Definition}
\newtheorem{theorem}{Theorem}
\newcommand{\vcell}{\mathrm{vcell}}
\newcommand{\disc}{\Delta} 
\newcommand{\hval}{h}     
\newcommand{\wt}{\widetilde}
\newcommand{\netpoint}[1]{\mathrm{N}(#1)}
\newcommand{\tree}{T}
\newcommand{\R}{\mathbb{R}}
\newcommand{\rep}{\mathrm{rep}}
\newcommand{\diam}{\mathrm{diam}}
\newcommand{\trep}{t} 
\newcommand{\eps}{\varepsilon}
\newcommand{\Cech}{\v{C}ech\xspace}
\newcommand{\dd}{\Delta}
\newcommand{\td}{\Delta_{\trep}}
\newcommand{\tds}[1]{\Delta_{#1}}
\newcommand{\ball}{\mathbb{B}}
\newcommand{\meb}{\mathrm{meb}}
\newcommand{\mebrad}{\mathrm{rad}}
\newcommand{\parent}{\mathrm{parent}}
\newcommand{\Rel}{\mathrm{Rel}}
\newcommand{\kapa}{k}
\newcommand{\level}{\ell}
\newcommand{\Approx}{\mathcal{A}}
\newcommand{\distance}[2]{\|#1-#2\|}
\newcommand{\nodedist}[2]{\mathrm{dist}(#1,#2)}
\DeclarePairedDelimiter{\ceil}{\lceil}{\rceil}
\DeclarePairedDelimiter{\floor}{\lfloor}{\rfloor}
\begin{document}

\title{Local Doubling Dimension of Point Sets}
\author{Aruni Choudhary\footnote{Max Planck Institute for Informatics, Saarbr\"ucken, Germany
 (aruni.choudhary@mpi-inf.mpg.de)} \and Michael Kerber\footnote{Max Planck Institute for Informatics, 
Saarbr\"ucken, Germany (mkerber@mpi-inf.mpg.de)}}

\date{\today}
\maketitle

\begin{abstract}

We introduce the notion of $\trep$-restricted doubling dimension of a point set in Euclidean space as the local
intrinsic dimension up to scale $\trep$. 
In many applications information is only relevant for a fixed range of scales.
We present an algorithm to construct a hierarchical net-tree up to scale $\trep$ which we denote as the 
net-forest. We present a method based on \emph{Locality Sensitive Hashing}
to compute all near neighbours of points within a certain distance. 
Our construction of the net-forest is probabilistic, and we guarantee that with high probability, 
the net-forest is supplemented with the correct neighbouring information. 
We apply our net-forest construction scheme to create an approximate \Cech complex up to a fixed scale;
and its complexity depends on the local intrinsic dimension up to that scale.

\end{abstract}

\section{Introduction}

\paragraph{Motivation}

Often, one wants to perform tasks on data which lives in high dimensional spaces. 
Typically, algorithms for manipulating such high dimensional data take
exponential time with respect to the ambient dimension. 
This is frequently quoted as the ``curse of dimensionality''.
\mbox{In many} cases, however, practical input instances lie on low dimensional manifolds and 
a natural question arises as to how do we exploit this structural property for
computationally feasible algorithms.

A well-established approach is to define a special notion of dimension on a point set.
The \emph{doubling dimension} of a point set $P$ is the smallest integer $\dd$ such that every ball
centered at $p\in P$ of radius $R$ is covered by at most $2^{\dd}$ non-empty balls of radius $R/2$ for any
$R$. For instance, if $P$ is a sample of an affine subspace of dimension $k$, it holds that $\dd=k$,
and often, $\dd\ll d$ holds for more general samples of $k$-manifolds. 
A common goal is therefore to replace the exponential dependency on $d$ by $\dd$ in the complexity
of geometric algorithms.

The concept of \emph{(hierarchical) net-trees} can be seen as a generalization of quadtrees
and allows for the translation of quadtree-based algorithms (which are exponential in $d$)
to cases with small $\dd$. Technically, a net-tree provides a hierarchy of \emph{nets}
which summarize the point set in terms of a clustering scheme on different scales.
For $n$ points with doubling dimension $\dd$, a net-tree
can be constructed in expected $2^{O(\dd)}O(n\log n)$ time, matching the time for constructing
a quadtree except for replacing $d$ with $\dd$~\cite{hm-fast}.
As an application of particular importance, net-trees permit 
the efficient construction of \emph{well-separated pair decomposition} (WSPD) which have various
applications in geometric approximation, such as constructing \emph{spanners}, finding \emph{approximate
nearest-neighbours}, approximating the \emph{diameter} and the \emph{closest-pair distance}.

In some applications, it is natural to upper bound the range of scales under investigation.
In such cases, the doubling dimension does not capture the intrinsic complexity of the problem at hand,
since it may be caused by a ball that is beyond the range of considered scales.
Moreover, the net-tree construction of~\cite{hm-fast} proceeds in a top-down fashion, 
considering the high scales of the point set first. It therefore suffers from potentially bad
large-scale properties of the point set, even when these properties are irrelevant for
the given application.

\paragraph{Contributions}
In this paper, we introduce the concept of $\trep$-restricted doubling dimension $\td$, 
which is the smallest integer such that any ball centered at $p\in P$ of radius $R\le\trep$
is covered by at most $2^{\td}$ non-empty balls of radius $R/2$.
For simplicity of presentation, we restrict ourselves to the case of point sets in Euclidean space
and postpone a more general treatment to an extended version of the paper.
We present an algorithm to construct a \emph{net-forest}, 
which contains the relevant data of a net-tree up to scale $\trep$. 
The runtime of the construction depends on $\tds{C\trep}$ where $C$ is a value independent of $n$
and is defined later on.
We hence remove the dependence on the doubling dimension $\dd$. 
The major geometric primitive of our algorithm is to find all neighbours of a point \mbox{$p\in P$}
with a distance of at most $\Theta(\trep)$.
We propose an approach based on \emph{Locality Sensitive Hashing} (LSH)
from~\cite{lsh}. We have a trade-off between the size of $C$ and the exponent of $n$
in the complexity bound.
The LSH based construction of the net-forest yields an expected runtime of
$O\left( dn^{1+\rho}\log n(\log n + (14/\rho)^{\tds{7t/\rho}}) \right)$
where $\rho\in(0,1)$ is a parameter which can be chosen to be as small as desired. 
Comparing this bound with the full net-tree construction,
our approach makes sense if $n^{\rho}\log n \ll  2^{O(\dd)}$, that is, $\dd$ is sufficiently large and
$\tds{O(t)}\ll \dd$.

As a consequence of our result, we can construct the part of the WSPD where all pairs
are in distance at most $\Theta(\trep)$, adapting the construction scheme of~\cite[Sec.5]{hm-fast}.
That means that any application of WSPD that restricts its attention to low scales can profit from
our approach.

As a further application, we show how to approximate \Cech complexes using net-forests;
\Cech complexes are a standard tool for capturing topological properties of a point cloud.
Such a complex depends on a scale parameter;
in particular, in the context of persistent homology~\cite{eh-computational}, 
\Cech filtrations are considered, which encode \Cech complexes at various scales.
In~\cite{cks-approximate}, an approximate filtration of size
$n(\frac{2}{\eps})^{O(\kapa\cdot\dd)}$ has been constructed using net-trees.
``Approximate'' means that the two filtrations are interleaved in the sense of~\cite{chazal}
and therefore yield similar persistence diagrams.
However, because of the large size of filtrations, it is common to limit their construction 
to an upper threshold value $\trep$. With our results, we can construct such a upper-bound filtration
of size $n(\frac{2}{\eps})^{O(\kapa\cdot \tds{O(\trep)})}$, thus replacing the doubling dimension 
in~\cite{cks-approximate} by the $O(\trep)$-restricted doubling dimension.

\paragraph{Organization of the paper}
Section~\ref{sec:background} gives a brief overview of doubling spaces and net-trees. We introduce the
concept of the restricted version of the doubling dimension in Section~\ref{sec:tdoubling}. 
In Section~\ref{sec:net-forest} we present an algorithm to construct the net-forest up to a certain scale. 
Our algorithm uses the concept of LSH which we detail in Section~\ref{sec:near-neighbour}. 
In Section~\ref{sec:apps} we give an overview of WSSDs and adapt their construction to use the net-forest.
We summarize our results and conclude in Section~\ref{sec:conclusion}.

\section{Background}
\label{sec:background}

We fix $P$ to be a finite point set consisting of $n$ points throughout.
As mentioned before, we restrict our attention to the Euclidean case $P\subset\R^d$, 
although some of the presented
concepts could be extended to arbitrary metric spaces with some additional effort. 
In particular, the distance between any two points can be computed in $O(d)$ time 
for Euclidean setups.

\paragraph{Doubling dimension}
A \emph{discrete ball} centered at a point $q \in P$ with radius $r$ is the set of points $Q\subseteq P$ which 
satisfy $\distance{p}{q}\leq r$ for all $q\in Q$.
The \emph{doubling constant}~\cite{asd-dd,talwar} 
is the smallest integer $\lambda$ such for all $p\in P$ and all $r>0$, the discrete ball centered at $p$ of radius
$r$ is covered by $\lambda$ discrete balls of radius $r/2$.
The \emph{doubling dimension} $\dd$ of $P$ is  $\ceil*{\log_2 \lambda}$. 
For example, a point set that is sampled from a $k$-dimensional subspace
has a doubling dimension of $k$, independent of the ambient dimension $d$.
In contrast, the $d$ boundary points of the standard $(d-1)$-simplex form a doubling space of dimension $\ceil*{\log_2 d}$.
Even worse, we can construct a subset of doubling dimension $\Theta(d)$ by placing $2^{\Theta(d)}$ points inside 
the unit ball
in $\R^d$ such that any two points have a distance of at least $3/2$ (the existence of such a point set follows by
a simple volume argument).
It is NP-hard to calculate the doubling dimension of a metric \cite{lee-dstar} but it can be approximated within
a constant factor \cite[Sec.9]{hm-fast}. 

\paragraph{Nets and Net-trees}
A subset $Q \subseteq P$ is an \emph{$(\alpha,\beta)$-net}, denoted by $\mathcal{N}_{\alpha,\beta}$, if
all points in $P$ are in distance at most $\alpha$ from some point in $Q$
and the distance between any two points in $Q$ is \mbox{at least $\beta$}.
Usually, $\alpha$ and $\beta$ are coupled, that is, $\beta=\Theta(\alpha)$, in which case we talk about a 
\emph{net at scale $\alpha$}.

We can represent a nested sequence of nets for increasing scales $\alpha$ using a rooted tree structure, 
called the \emph{net-tree}~\cite{hm-fast}. 
It has $n$ leaves, each representing a point of $P$, and each internal node has at least two children. 
Every tree-node $v$ represents the subsets of points given by the sub-tree rooted at $v$; we denote this set by $P_v$.
Every $v$ has a representative, $\rep_v\in P_v$ that equals the representative of one of its children if $v$ is not a leaf. 
Moreover, $v$ is associated with an integer $\level(v)$ called the \emph{level} of v which satisfies 
$\level(v) < \level(\parent(v))$, where $\parent(v)$ is the parent of $v$ in the tree. 
Finally, each node satisfies the following properties

 \begin{itemize}
  \item \emph{Covering property}: $P_v \subseteq \ball(\rep_v,\frac{2\tau}{\tau-1}\cdot\tau^{\level(v)})$ 
  \item \emph{Packing property}: 
  $P_v \supseteq P \bigcap \ball(\rep_v,\frac{\tau-5}{2\tau(\tau-1)}\cdot\tau^{\level(\parent(v))})$
 \end{itemize}
where $\ball(p,r)$ denotes the ball centered at $p$ with radius $r$ and $\tau=11$.

The covering and packing properties ensure that each node $v$ has at most $\lambda^{O(1)}$ 
children where $\lambda$ is the doubling constant for $P$. Moreover, for any $\alpha$,
a net at scale $\alpha$ can be read off from the net-tree immediately;
see~\cite[Prop.2.2]{hm-fast} for details.
A net-tree can be constructed deterministically in time $2^{O(\dd)}O(dn \log (n\cdot \Phi))$ where $\Phi$ 
represents the \emph{spread} of $P$,
using the greedy clustering scheme of Gonzalez \cite{gon} as a precursor to the tree construction. 
The dependence on spread can be eliminated by constructing the tree in $2^{O(\dd)}dn \log n$ time in expectation
(the additional factor of $d$ compared to~\cite{hm-fast} accounts for the fact that we fixed the Euclidean metric,
and therefore take into account the cost of computing distances in our computational model).
The net-tree construction is oblivious to knowing the value of $\dd$.
One can extract a net at scale $\ell$~\cite[Pro.2.2]{hm-fast} by collecting the set of nodes from $\tree$ 
satisfying the condition $\mathcal{N}(\ell)=\{\rep_v|\level(v) < \ell \le \level(\parent(v)\}$.
The net-tree can be augmented to maintain, for each node $u$, a list of close-by nodes with similar diameter.
Specifically, for each node $u$ the data structure maintains the set
\begin{equation}
\label{eqn:rel}
 \Rel(u) := \{v \in N\ |\ \level(v)\le \level(u) < \level(\parent(v))\  and\ \distance{\rep_u}{\rep_v} 
 \le 14\tau^{\level(u)}\}.
\end{equation}
$\Rel(.)$ is computed during the construction without additional cost.

\section{$\trep$-restricted doubling dimension}
\label{sec:tdoubling}

\begin{defi}
The \emph{$\trep$-restricted doubling constant} of $P$ is the smallest
positive integer $\lambda_{\trep}$ such that all the points in any discrete ball centered at $p\in P$ of radius $r$ with 
$r\le \trep$ are covered by $\lambda_{\trep}$ non empty balls of radius $r/2$. The corresponding 
\emph{$\trep$-restricted doubling dimension} $\td$ is $\ceil*{\log \lambda_{\trep}}$.
\end{defi}

By definition, $\td \le \dd$ for any $P$. More precisely, $\td$ is zero for $\trep$ smaller than the closest-pair
distance of $P$, and equals $\dd$ when $\trep$ is the diameter of $P$.
While the doubling dimension of samples from an affine subspace of dimension $k$ is bounded by $k$, 
this is not generally true for samples of $k$-manifolds where $\dd$ increases due to curvature. 
To sketch an extreme example, consider an \mbox{\emph{almost space-filling}} curve $\gamma$ in $\R^d$ which has distance 
at most $\eps$ to any point of the unit ball, where $\eps$ is chosen small enough. 
We let $P$ be a sufficiently dense sample of $\gamma$. 
While $\td=1$ for small values of $\trep$, we claim that $\td=\Theta(d)$ for $\td=1$; 
indeed, any sparser covering of the unit ball
with balls of radius $1/2$ would leave some portion of the ball uncovered, and by construction, 
$\gamma$ goes through that uncovered region, so that some point in $P$ is missed. 
We skip a more formal treatment of this argument.

The ``badness'' of the previous example stems from the difference between Euclidean and geodesic distance of points 
lying on a lower-dimensional manifold. 
A common technique for approximating the geodesic distance is through the \emph{shortest-path metric}:
Let $G=(P,E)$ denote the graph whose
edges are defined by the pairs of points of Euclidean distance at most $\trep$. The distance of two points $p$ and $q$
is then defined as the length of the shortest path from $p$ to $q$ (we assume for simplicity that $G$ is connected).
The concept of doubling dimensions extends to any metric space and we let $\dd'$ denote the doubling dimension
of $P$ equipped with the shortest path metric. 
While $\td$ and $\dd'$ appear to be related, $\dd'$ can be much larger than $\td$ in general.
Moreover, using the shortest-path metric raises the question of how to compute shortest path distances
efficiently, if the cost of metric queries is taken into account.

\section{Net-forests}
\label{sec:net-forest}

We next define an appropriate data structure for point sets of small $\trep$-restricted doubling dimension,
where $\trep$ is a parameter of the construction.
Informally, a \emph{net-forest} is the subset of a net-tree obtained by truncating all nodes
above scale $\trep$. More precisely, it is a collection of net-trees with roots $v_1,\ldots,v_k$ such that
the representatives $\rep_{v_1},\ldots,\rep_{v_k}$ form a $(\trep,\trep)$-net and the point sets
$P_{v_1},\ldots,P_{v_k}$ are disjoint and their union covers~$P$.
We define $\Rel(u)$ for a node in the forest the same way as in \eqref{eqn:rel} as the set of net-forest nodes
that are close to $u$ and of similar scale.
As for net-trees, we call a net-forest \emph{augmented} if each node $u$ is equipped with $\Rel(u)$.

\paragraph{Construction} 
Our algorithm for constructing a net-forest is a simple adaption of the net-tree algorithm: we construct 
a $(\trep,\trep)$-net of $P$ by clustering the point set and assign each point in $P$ to its closest net-point.
Each root in the net-forest represents one of the clusters. We also compute $\Rel(u)$ for each root
by finding the close-by clusters to $u$. Having this information, we can simply run the net-tree algorithm 
from~\cite{hm-fast} individually on each cluster to construct the net-forest. 
For augmenting it, we use the top-down traversal strategy
as described in~\cite[Sec.3.4]{hm-fast}, inferring the neighbours of a node from the neighbours
of its parent~--since we have set up $\Rel(\cdot)$ for the roots of the forest, this
strategy is guaranteed to detect neighbours even if they belong to different trees of the forest.

Both the initial net construction and the $\Rel(\cdot)$-construction require the following primitive
for a point set $Q$, which we call a \emph{near-neighbour query}:
\textit{Given a point $q\in Q$ and a radius $r$, return a list of points in $Q$ containing
exactly the points at distance $r$ or smaller from $q$).}
In the remainder of the section, we give more details next on how to compute net and the associated clusters, 
and how to find the neighbours for each such cluster, assuming that we have a primitive which 
can perform near-neighbour queries. 
In Section~\ref{sec:near-neighbour}, we show the implementation of such an primitive.

%%%%%

\paragraph{Net construction}
We construct the net using a greedy scheme: For any input point, store a pointer 
$\netpoint{p}$ pointing to the net point assigned to point $p$. 
Initially, $\netpoint{p}\gets NULL$ for all $p$. 
As long as there is a point $p$ with $\netpoint{p}=NULL$, we set $\netpoint{p}\gets p$
and query the near-neighbour primitive to get a list of points with distance at most $t$ from $p$. 
For any point $q$ in the list
we update $\netpoint{q}\gets p$ if either $\netpoint{q}=NULL$ or \mbox{$\distance{p}{q}<\distance{N(q)}{q}$}.
Then we pick the next point $p$ with $\netpoint{p}=NULL$.

At the end, the set of points $p$ with $\netpoint{p}=p$ represent the net at scale $\trep$ and points
$q$ satisfying $\netpoint{q}=p$ constitute $p's$ cluster. All points are assigned to their closest
net-point. The net thus constructed is a $(\trep,\trep)$-net. Moreover, we assign the same level to all root 
clusters. In particular, for any root node $v$, we set $\level(v)$ such that
$\frac{2\tau}{\tau-1}\cdot\tau^{\level(v)}=\trep$. 
Specifically, we set $\level(v)=\floor{\log_{\tau} \big(\frac{\tau-1}{2\tau}t\big)}$.

\paragraph{Computing the Rel(.)\ set for the roots}
After computing the net-points and their respective clusters, we need to augment the net-points with
neighbouring information. Recall that $\Rel(u)$ contains nodes in distance at most $14\tau^{\level(u)}$ from
$rep_u$. Since we have a $(\trep,\trep)$-net, the level of any root node $u$ satisfies
$14\tau^{\level(u)}\le 7\trep$. Hence we need to
find neighbours of net-points within $7\trep$, the minimum distance between any two net-points being more than $\trep$. 
By the doubling property, any root net-node can have at most $\lambda_{7\trep}^{\log_2 \frac{7\trep}{\trep/2}}$ 
such neighbours which simplifies to $C'=14^{\tds{7\trep}}$. We use the near-neighbour primitive to compute such neighbours.

\section{Near-neighbours primitive}
\label{sec:near-neighbour}

We describe the primitive used in the previous section which performs near-neighbour queries.
Our approach follows the notion of 
\emph{Locality-sensitive hashing} (LSH) introduced by \cite{old-lsh} for the Hamming metric 
and extended to Euclidean spaces in \cite{lsh}. LSH is a popular approach to find approximate
near-neighbours in high dimensions.

\paragraph{Locality Sensitive Hashing}

LSH applies several hash functions on a point set such that close points are more likely to map to the same hash-buckets 
than points which are sufficiently far away. A typical application of LSH is the $(r,c)$-\emph{nearest neighbour}
problem: If there exists a point within distance $r$ of the query point $q$, report some point within distance
$cr$ of $q$, $c>1$.

However, for our construction we wish to solve the following problem: report \emph{all} points within distance $r$ of 
the query point. We need the LSH oracle for two steps in our construction: constructing the net at scale $\trep$
and computing the $\Rel(\cdot)$ for the root-nodes.
We show that both these steps requires a runtime sub-quadratic in $n$ by a slight modification of the 
method presented in~\cite{lsh}.
We repeat some of their definitions for clarity:

\begin{defi}
 A family of hash functions $\mathcal{H}=\{h:S\rightarrow U\}$ is called 
\emph{$(r_1,r_2,p_1,p_2)$-sensitive} if for all $a,b \in S$, the following holds:
 \begin{itemize}
  \item if $\distance{a}{b}\le r_1, Pr_1=P[h(a)=h(b)]\ge p_1$
  \item if $\distance{a}{b}\ge r_2, Pr_2=P[h(a)=h(b)]\le p_2$
  \item $p_1 \ge p_2$ and $r_1 \le r_2$
 \end{itemize}
\end{defi}
We amplify the gap between $Pr_1$ and $Pr_2$ by concatenating $k$ such hash functions,
creating the family of hash functions $\mathcal{G}=\{g:S\rightarrow U^k\}$ such that 
\mbox{$g(x)=(h_1(x),h_2(x),...,h_k(x))$.} For $g(x)$, we have the modified properties:
  \begin{itemize}
    \item if $\distance{a}{b} \le r_1, P[g(a)=g(b)]\ge p_1^k$
    \item if $\distance{a}{b} \ge r_2, P[g(a)=g(b)]\le p_2^k$
  \end{itemize}
We describe our near-neighbour primitive next:
The input is a point set $Q$ with $n$ points and a distance $r>0$. As pre-processing step,  
we choose $l$ hash functions $g_1,....,g_l$ uniformly at random from $\mathcal{G}$ \cite[Sec.3]{lsh}. 
and hash each $p \in Q$ to the buckets $g_i(p)\forall i\in [1,l]$. 
Given a query point $q\in Q$, 
we iterate over $i=1,\ldots,l$ and check for any point $p$ in bucket $g_i(q)$ whether the distance to $q$ is at most $r$.
We output the points with this property as the near-neighbours of $q$.

We need to specify the parameters of LSH in the above description.
Most importantly, we have to ensure
that, with high probability, the output contains all points in distance $r$ from $q$.
Moreover, we want the buckets to be of small size so that the primitive does not have to filter out
too many false positives.

The performance of the LSH scheme depends upon a parameter $\rho\in(0,1)$ which appears as an exponent
of $n$ in the runtime.
We choose the parameters $p_1$, $p_2$, $r_1$ and $r_2$ of the hashing scheme such that
$\rho=\frac{\log p_1}{\log p_2}\approx \frac{r_1}{r_2}$~\cite[Sec.4]{lsh}. In the following parts
of the section, we let $\rho=\frac{r_1}{r_2}$.

\begin{lemma}
\label{lem:lsh-net-kl}
Let  $r_1:=r$ and $r_2:=r/\rho$,
$k:=\ceil{-\log_{p_2} n}$ and $l:=\ceil{2n^{\rho}\ln \frac{n}{\sqrt{\delta}}}$
with an arbitrarily small constant $\delta$.
The near-neighbour primitive has the following properties:
 \begin{enumerate}[label=(\roman{*}), ref=(\roman{*})]
 \item \label{net:cond1} With probability at least $1-\delta$, all points in distance at most $r$ are reported 
 for all query points.
 
 \item \label{net:cond2} For any query point $q$, the aggregate expected size of all buckets $g_1(q),...,g_l(q)$ is 
 at most $l(\wt{C}+1)$,
where $\wt{C}$ is the number of points in $Q$ with distance at most $r_2$ to $q$.

 \item \label{net:cond3} The pre-processing runtime is $O(dnkl)$ 
and the expected query runtime for a point is is $O(dl(k+\wt{C}))$, where $\wt{C}$ is defined as in~\ref{net:cond2}.
\end{enumerate}
\end{lemma}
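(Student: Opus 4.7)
The proof has three essentially independent parts, which I would handle in order \ref{net:cond1}, \ref{net:cond2}, \ref{net:cond3}, each following the template of the standard LSH analysis of~\cite{lsh}.

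For \ref{net:cond1}, the plan is a double union bound. Fix a query $q$ and a point $p$ with $\distance{p}{q}\le r_1$. The probability that a single concatenated function $g_i$ fails to collide $p$ and $q$ is at most $1-p_1^k$, and by independence of the $l$ draws, the probability that none of $g_1,\ldots,g_l$ does is at most $(1-p_1^k)^l\le \exp(-l\,p_1^k)$. The identity $p_1=p_2^{\rho}$ combined with $k=\ceil*{-\log_{p_2}n}$ forces $p_1^k=\Omega(n^{-\rho})$, so substituting $l=\ceil*{2n^{\rho}\ln(n/\sqrt{\delta})}$ makes $\exp(-l\,p_1^k)\le \delta/n^2$. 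A union bound over the at most $n^2$ pairs $(q,p)$ -- at most $n$ queries, each with at most $n$ candidate near-neighbours -- then yields overall failure probability at most $\delta$.

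For \ref{net:cond2}, I would split the contribution to the expected size of a single bucket $g_i(q)$ by distance from $q$. The $\wt{C}$ points within distance $r_2$ contribute at most $\wt{C}$ in expectation using the trivial bound $\Pr[g_i(p)=g_i(q)]\le 1$. Every other point $p$ satisfies $\distance{p}{q}>r_2$, so $\Pr[g_i(p)=g_i(q)]\le p_2^k\le 1/n$ by the choice of $k$; summed over at most $n$ far points this contributes at most $1$ in expectation. Thus the expected size of a single bucket is $\le \wt{C}+1$, and linearity over the $l$ independent hash functions yields the claimed $l(\wt{C}+1)$. The cost bounds in \ref{net:cond3} then fall out by direct bookkeeping: pre-processing evaluates $l$ functions, each a concatenation of $k$ base hash functions of $O(d)$ evaluation cost, on $n$ points, giving $O(dnkl)$; a query evaluates $l$ hashes at $q$ in $O(dkl)$ time and then performs an $O(d)$ distance check on every point encountered across the $l$ buckets, which by \ref{net:cond2} totals $O(dl(\wt{C}+1))$ in expectation, yielding an overall expected query time of $O(dl(k+\wt{C}))$.

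The main obstacle, and the only step where actual analysis is required rather than bookkeeping, is the parameter tuning in \ref{net:cond1}: $k$ and $l$ must simultaneously be small enough for the pre-processing and query costs to be sub-quadratic in $n$ while being large enough that the single-function collision-probability gap $p_1/p_2$ is amplified to a polynomial factor $n^{\Theta(1)}$ over the $k$-fold concatenation and $l$-fold repetition. The constant $2$ in the definition of $l$ is precisely what absorbs the rounding loss from the ceiling in the definition of $k$, so that the tail estimate in step \ref{net:cond1} closes against the union-bound budget of $1/n^2$ per pair.
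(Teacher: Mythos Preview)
Your proposal is correct and follows essentially the same argument as the paper: the paper also splits buckets into close/far points for \ref{net:cond2}, uses the same single-pair failure bound $(1-p_1^k)^l$ with $p_1^k\approx n^{-\rho}$ together with a union bound over at most $n^2$ pairs for \ref{net:cond1}, and does the identical bookkeeping for \ref{net:cond3}. One minor remark: the factor $2$ in $l$ is actually consumed by the $n^2$ in the union bound (since $2\ln(n/\sqrt{\delta})=\ln(n^2/\delta)$), not by the ceiling in $k$; the paper, like you, simply treats $p_1^k$ as $n^{-\rho}$ and ignores the rounding.
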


\begin{proof}
First we bound the expected aggregate size of the buckets. 
A bucket contains ``close'' points which are in distance at most $r_2$ from $q$
and ``far'' points which are further away. However, since the probability of 
a far point falling in the same bucket as $q$ is at most $p_2^k$, the expected
size of a single bucket is at most $\wt{C}+np_2^k\leq \wt{C}+1$ by our choice of $k$.
Since there are $l$ buckets, \ref{net:cond2} is satisfied.

For~\ref{net:cond1}, fix two points $q_1,q_2\in Q$ with distance at most $r_1$.
We have to ensure that $g_j(q_1)=g_j(q_2)$ for some $j\in\{1,\ldots,l\}$; this implies
that $q_1$ will be reported for query point $q_2$, and vice versa.
The probability for $g_j(q_1)=g_j(q_2)$ for a fixed $j$ is at least $p_1^k$,
which is $p_1^{-\log_{p_2} n}=n^{-\rho}$.
Hence the probability that $g_j(q_1)\neq g_j(q_2)$ holds for all $j\in\{1,\ldots,l\}$ is at most
$(1-n^{-\rho})^l$ because we choose the hash functions uniformly at random. 
There are at most $n^2$ point pairs within distance at most $r_1$. By the union bound, the probability
that at least one such pair maps into different buckets is at most
$n^2(1-n^{-\rho})^l$. Now we can bound
\begin{eqnarray*}
n^2(1-n^{-\rho})^l&=&n^2(1-n^{-\rho})^{2n^{\rho}\ln \frac{n}{\sqrt{\delta}}}\\
&=&n^2(1-\frac{1}{n^{\rho}})^{n^{\rho}\ln \frac{n^2}{\delta}}\\
&\leq& n^2 e^{-\ln \frac{n^2}{\delta}}=\delta,
\end{eqnarray*}
where we used the fact that $(1-1/x)^x\leq 1/e$ for all $x\geq 1$.
It follows that the probability that all pairs of points in distance at most $r_1$
fall in at least one common bucket is at least $1-\delta$.
This implies~\ref{net:cond1}.

It remains to show~\ref{net:cond3}: in the pre-processing step, we have to compute $k\cdot l$ hash functions for $n$ points.
Computing the hash value for a point $p$, $h_i(p)$ takes $O(d)$ time \cite[Sec.3.2]{lsh}.
For a query, we have to identify the buckets to consider in $O(dkl)$ time
and then iterate through the (expected) $l(\wt{C}+1)$ candidates (using \ref{net:cond2}), 
spending $O(d)$ for each.
\end{proof}

\paragraph{Net-forest construction using LSH}
We analyze the complexity of our net-forest construction from Section~\ref{sec:net-forest}
with the near-neighbour primitive that uses LSH.
The primitive is used in the construction of the $(\trep,\trep)$-net,
where we find the near-neighbours of distance at most $\trep$ for a subset of points that form the net
in the end.
That means, we initialize the primitive with $r\gets \trep$ and $Q\gets P$.
\begin{lemma}
\label{lem:lsh-net-time}
 The expected time to construct the $(\trep,\trep)$-net using LSH is
$$O\bigg( dn^{1+\rho}\log n \Big( \log n + \left(\frac{2}{\rho}\right)^{\tds{\trep/\rho}} \Big) \bigg).$$
\end{lemma}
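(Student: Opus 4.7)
The plan is to add up the pre-processing cost and the total query cost of the LSH near-neighbour primitive from Lemma~\ref{lem:lsh-net-kl}, instantiated with $r_1 = \trep$, so $r_2 = \trep/\rho$, and $Q = P$. Since $p_2 \in (0,1)$ is a constant of the LSH family, $k = \lceil -\log_{p_2} n\rceil = O(\log n)$ and $l = \lceil 2n^{\rho} \ln(n/\sqrt{\delta})\rceil = O(n^{\rho}\log n)$. By Lemma~\ref{lem:lsh-net-kl}\ref{net:cond3}, the one-time pre-processing costs $O(dnkl) = O(dn^{1+\rho}\log^2 n)$, which supplies the $\log n$ term inside the parenthesis of the target bound.

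Next I bound the number of queries. In the greedy scheme, a query is triggered only by a point whose $\netpoint{\cdot}$-pointer is still $NULL$; immediately after that query every $q$ within distance $\trep$ (including the query point) receives a non-null pointer. Hence each point is queried at most once, and the total number of queries is bounded by the number of net points, which is at most $n$. By Lemma~\ref{lem:lsh-net-kl}\ref{net:cond3} the expected cost of the query at a net point $p$ is $O(dl(k + \wt{C}(p)))$, where $\wt{C}(p)$ is the number of points of $P$ at distance at most $r_2 = \trep/\rho$ from $p$. Summing over net points gives an expected total query cost of $O(dl(nk + \sum_{p}\wt{C}(p)))$.

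The main step is to bound $\sum_{p} \wt{C}(p)$ using the restricted doubling dimension. Swapping the order of summation, $\sum_p \wt{C}(p) = \sum_{q \in P} \#\{p \text{ net point}: \|p-q\| \le \trep/\rho\}$. For each $q$ I use the standard packing argument: since any two net points are at distance strictly greater than $\trep$, each ball of radius $\trep/2$ contains at most one net point, so it suffices to cover $\ball(q,\trep/\rho)$ by balls of radius $\trep/2$. Iterating the $\trep/\rho$-restricted doubling property $\lceil \log_2(2/\rho)\rceil$ times produces such a cover of size at most $(2/\rho)^{\tds{\trep/\rho}}$ (up to a constant). All intermediate radii $\trep/(2\rho), \trep/(4\rho), \ldots, \trep/2$ lie below $\trep/\rho$, so the restricted doubling bound applies at every halving. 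Hence the inner count is $O((2/\rho)^{\tds{\trep/\rho}})$ for every $q$, and $\sum_p \wt{C}(p) = O(n(2/\rho)^{\tds{\trep/\rho}})$.

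Combining, the expected total cost is $O(dn^{1+\rho}\log^2 n) + O(dn^{1+\rho}\log n \cdot ((2/\rho)^{\tds{\trep/\rho}} + \log n))$, which simplifies to the claimed bound. The only real obstacle is the packing step of the third paragraph, in particular checking that every radius arising in the halving chain stays within the regime where $\tds{\trep/\rho}$ is guaranteed to apply; once this is in place the rest is just substituting the asymptotic values of $k$ and $l$ into Lemma~\ref{lem:lsh-net-kl}\ref{net:cond3}.
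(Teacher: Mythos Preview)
Your proof is correct and follows essentially the same route as the paper: you split the cost into the LSH pre-processing and the per-query cost, bound the number of queries by the number of net points $m\le n$, and control $\sum_i \wt C_i$ by the same double-counting/packing argument (fix $q\in P$, count close net points via the $\tds{\trep/\rho}$-restricted doubling bound). The only difference is cosmetic: you make the order-of-summation swap and the radius check along the halving chain explicit, whereas the paper states the packing bound $\lambda_{\trep/\rho}^{\log_2(2/\rho)}$ directly.
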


\begin{proof}
We consider the time spend on all near-neighbour queries: Let the resulting net
consist of $m\leq n$ points. 
This implies that the algorithm proceeds in $m$ rounds
and queries the near-neighbours of $m$ points. Let $\wt{C}_i$ denote the number
of points in distance $t/\rho$ from the $i$-th query point. 
By Lemma~\ref{lem:lsh-net-kl}, the total complexity for all near-neighbour queries is: 
\begin{eqnarray}
O\left(ndkl+\sum_{i=1}^m dl(k+\wt{C}_i)\right)=O(ndkl+dl\sum_{i=1}^m \wt{C}_i)
\label{eqn:net_complexity}
\end{eqnarray}
We only need to bound the sum of the $\wt{C}_i$. 
For that, we fix some $q\in P$ and count in how many sets $\wt{C}_i$ may it appear.
Let $p_i$ denote the net-point chosen in the $i$-th iteration. 
We call such a net-point \emph{close} to $q$ if the distance to $q$ is at most $\trep/\rho$.
By definition, the net-points close to $q$ lie in a ball of radius $\trep/\rho$ centered at $q$.
Since any pair of net-points has a distance of more than $\trep$, any ball
of radius $\trep/2$ can contain at most one close net point.
Following the definition of the $\trep$-restricted doubling dimension, the number of such net-points can be at most
$\lambda_{\trep/\rho}^{\log_2\frac{\frac{\trep}{\rho}}{\frac{\trep}{2}}}$ which simplifies to 
$\big(\frac{2}{\rho}\big)^{\tds{\trep/\rho}}$. It follows that 
$$\sum_{i=1}^m \wt{C}_i \leq n \big(\frac{2}{\rho}\big)^{\tds{\trep/\rho}}.$$
Plugging this into $\eqref{eqn:net_complexity}$, we get the claimed running time, observing that
$k=O(\log n)$ and $l=O(n^\rho\log n)$ by Lemma~\ref{lem:lsh-net-kl}.
All additional operations in the net construction besides the calls of the primitive are 
dominated by that complexity.
\end{proof}

The second appearance of the near-neighbour primitive is in the construction of the $\Rel(.)$ sets
for the roots of the net forest. Recall that the roots are represented by the net-points
constructed before; let $M$ denote the set of net-points and $m$ their cardinality.
We simply have to find all pairs of points of distance at most $7\trep$ among the net-points;
and to do so we call the near neighbour primitive with $r\gets 7\trep$ and $Q\gets M$ for all $q\in M$.
\begin{lemma}
\label{lem:lsh-rel-time}
Computing the $\Rel(.)$ sets using LSH takes expected time
$$O\left( dn^{1+\rho}\log n\left(\log n + \left(\frac{14}{\rho}\right)^{\tds{7\trep/\rho}}\right) \right).$$
\end{lemma}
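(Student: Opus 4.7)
The plan is to reuse the template of Lemma~\ref{lem:lsh-net-time} almost verbatim, replacing the query radius $\trep$ by $7\trep$ and the query set $P$ by the set $M$ of net-point roots. Let $m=|M|\le n$. I will invoke Lemma~\ref{lem:lsh-net-kl} with parameters $r_1:=7\trep$, $r_2:=7\trep/\rho$, $Q:=M$, $k=\ceil{-\log_{p_2} m}$, and $l=\ceil{2m^{\rho}\ln(m/\sqrt{\delta})}$. Since we issue one near-neighbour query for each of the $m$ net-points, the total expected cost decomposes as
\begin{equation*}
O\!\left(dmkl+dl\sum_{i=1}^{m}\wt{C}_i\right),
\end{equation*}
where $\wt{C}_i$ counts the net-points within distance $7\trep/\rho$ of the $i$-th query point.

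The core step is bounding $\sum_i \wt{C}_i$, and the argument is a packing/covering computation analogous to the one in Lemma~\ref{lem:lsh-net-time}. Fix any $q\in M$, and consider the ball $\ball(q,7\trep/\rho)$. Because $7\trep/\rho$ satisfies the hypothesis of the $\trep$-restricted doubling definition with threshold $7\trep/\rho$, we can iteratively apply the halving property $\log_2(14/\rho)$ times to cover $\ball(q,7\trep/\rho)$ by
\begin{equation*}
\lambda_{7\trep/\rho}^{\log_{2}(14/\rho)} \;=\; (14/\rho)^{\tds{7\trep/\rho}}
\end{equation*}
balls of radius $\trep/2$. Since the net-points in $M$ are pairwise at Euclidean distance strictly greater than $\trep$, each such radius-$\trep/2$ ball contains at most one net-point. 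Consequently $\wt{C}_i \le (14/\rho)^{\tds{7\trep/\rho}}$ for every $i$, and
\begin{equation*}
\sum_{i=1}^{m}\wt{C}_i \;\le\; m\,(14/\rho)^{\tds{7\trep/\rho}} \;\le\; n\,(14/\rho)^{\tds{7\trep/\rho}}.
\end{equation*}

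Substituting $k=O(\log m)=O(\log n)$ and $l=O(m^{\rho}\log m)=O(n^{\rho}\log n)$ (using $m\le n$) into the cost expression yields
\begin{equation*}
O\!\left(d n\cdot \log n\cdot n^{\rho}\log n \;+\; d\,n^{\rho}\log n\cdot n\,(14/\rho)^{\tds{7\trep/\rho}}\right),
\end{equation*}
which collapses to the claimed bound. The remaining bookkeeping (forming the augmentation sets from the lists of reported neighbours) is linear in the total output size and absorbed into the expression above. The only genuine obstacle is ensuring the restricted doubling hypothesis applies at the right scale: one must observe that the relevant radius is $7\trep/\rho$ (not $\trep$), so the exponent is $\tds{7\trep/\rho}$ rather than $\tds{\trep}$, which is precisely what appears in the statement.
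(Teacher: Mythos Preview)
Your proof is correct and follows the same approach as the paper's own argument: bound each $\wt{C}_i$ individually by the packing argument (covering the radius-$7\trep/\rho$ ball by $(14/\rho)^{\tds{7\trep/\rho}}$ balls of radius $\trep/2$, each holding at most one net-point), sum over the $m\le n$ queries, and plug in $k=O(\log n)$, $l=O(n^{\rho}\log n)$. The paper's proof is terser---it simply declares the argument ``analogous'' to Lemma~\ref{lem:lsh-net-time} and states the per-query bound $\wt{C}_i\le(14/\rho)^{\tds{7\trep/\rho}}$---but your write-up is the natural unpacking of that sentence.
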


\begin{proof}
The proof is analogous to the proof of Lemma~\ref{lem:lsh-net-time}:
Let $\wt{C}_i$ (for $i=1,\ldots,m$) denote the number of net-points in distance at most $\frac{7\trep}{\rho}$
to the $i$-th net point. The same packing argument as in the previous Lemma shows that any $\wt{C}_i$
can be at most $(\frac{14}{\rho})^{\tds{7\trep/\rho}}$, so the their sum is bounded by 
$m(\frac{14}{\rho})^{\tds{7\trep/\rho}}$.
Analogous to the proof of Lemma~\ref{lem:lsh-net-time}, we can thus bound the runtime to
be as required, noting that $m\leq n$.
\end{proof}

\begin{theorem}
  \label{thm:lsh-time}
The expected time for constructing the net-forest using LSH is
$$O\left( dn^{1+\rho}\log n\left(\log n + \left(\frac{14}{\rho}\right)^{\tds{7\trep/\rho}}\right) \right).$$
\end{theorem}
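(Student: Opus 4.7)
The plan is to show that the total running time decomposes into four parts, each of which is either directly bounded by one of the preceding lemmas or dominated by the bound from Lemma~\ref{lem:lsh-rel-time}. The four parts are: (a) the $(\trep,\trep)$-net construction, (b) the computation of $\Rel(\cdot)$ for the roots, (c) the net-tree construction on each individual cluster via the algorithm of~\cite{hm-fast}, and (d) the top-down $\Rel(\cdot)$ augmentation of the interior nodes of each cluster's net-tree.

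Parts (a) and (b) are immediate: Lemma~\ref{lem:lsh-net-time} gives the bound for (a), and Lemma~\ref{lem:lsh-rel-time} gives the bound for (b). Since $\tds{\cdot}$ is non-decreasing in its argument and $(14/\rho)^{x}\ge (2/\rho)^{x}$ for every $x\ge 0$, the bound in Lemma~\ref{lem:lsh-rel-time} dominates the one in Lemma~\ref{lem:lsh-net-time}, so their sum is within a constant factor of the claimed expression.

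For (c), I would argue that each cluster has small internal diameter, and hence each cluster's doubling dimension is controlled by $\tds{O(\trep)}$. Concretely, every cluster $P_{v_i}$ lies inside a ball of radius $\trep$ around its root representative, so any sub-ball of that cluster has radius at most $2\trep$; the restricted doubling property then caps the doubling constant of $P_{v_i}$ by $2^{\tds{2\trep}}$. Feeding this into the net-tree construction bound of~\cite{hm-fast} yields per-cluster cost $2^{O(\tds{2\trep})} d\, n_i \log n_i$, which sums over clusters to $2^{O(\tds{2\trep})}\, d\, n\log n$. Since $\trep/\rho \ge 2\trep$ for every $\rho\in(0,1)$ (actually $\rho \le 1/2$ suffices, but even $\rho$ close to $1$ only loses constants) and $\log(14/\rho)\ge 1$, we have $(14/\rho)^{\tds{7\trep/\rho}}\ge 2^{\tds{2\trep}}$, so part~(c) is absorbed into the claimed bound.

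For (d), the top-down strategy of~\cite[Sec.3.4]{hm-fast} computes $\Rel(u)$ for each non-root node $u$ by inspecting only children of nodes in $\Rel(\parent(u))$; the work is proportional (up to distance-computation cost $O(d)$) to $\sum_u |\Rel(\parent(u))|$. By the restricted doubling property on each cluster, $|\Rel(v)| = 2^{O(\tds{2\trep})}$ for every node $v$ in the forest, so the total augmentation cost is $2^{O(\tds{2\trep})}\, d\, n$, again dominated by the Lemma~\ref{lem:lsh-rel-time} bound. The main subtlety I expect to be careful with is verifying that the top-down traversal remains correct across cluster boundaries: this is precisely why $\Rel(\cdot)$ is seeded at the roots via the near-neighbour primitive in part~(b), so that neighbours in \emph{different} trees are already visible at the top level and propagate downward via the standard argument of~\cite{hm-fast}. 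Once (a)--(d) are assembled and the dominance comparisons above are made explicit, the theorem follows.
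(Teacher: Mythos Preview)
Your decomposition into parts (a)--(d) and the way you invoke Lemmas~\ref{lem:lsh-net-time} and~\ref{lem:lsh-rel-time}, then argue that the per-cluster net-tree construction and the top-down augmentation are dominated by the Lemma~\ref{lem:lsh-rel-time} bound, is exactly the paper's proof. One small fix in part~(d): the bound $|\Rel(v)|=2^{O(\tds{2\trep})}$ cannot come from the doubling property ``on each cluster'' alone, since $\Rel$ of a root (and hence the work at its children) involves nodes in \emph{other} trees at distance up to $7\trep$; the correct bound there is $2^{O(\tds{7\trep})}$ (as the paper uses), which is still absorbed since $\tds{7\trep}\le\tds{7\trep/\rho}$.
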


\begin{proof}
Using Lemma~\ref{lem:lsh-net-time} and Lemma~\ref{lem:lsh-rel-time}, constructing the net and its $\Rel(\cdot)$ sets
are within the complexity bound. Constructing a single net-tree for a node containing $n_i$ points takes time at most 
$2^{14\td}n_i\log n_i$ (the factor of $14$ in $2's$ exponent can be seen by a careful analysis of~\cite[Sec.3.4]{hm-fast}).
Constructing individual net-trees for the clusters takes time:
 $\sum _{i=1} ^{m} 2^{14\td}d n_i \log n_i$
 Since $\sum_{i=1}^{m}n_i=n$,
 the above runtime simplifies to $2^{14\td}d n \log n$. 
 Augmenting the net-forest takes time $dn2^{14\tds{7t}}$~\cite[Sec.3.4]{hm-fast}.
 The runtime for the latter steps are dominated by the $\Rel(\cdot)$ and net construction for 
 sufficiently large values of $n$.
\qedhere
\end{proof}
We see how the choice of $\rho$ affects the complexity bound: For $\rho$ very close to zero, we get a almost
linear complexity in $n$, to the price that we have to consider larger balls in our algorithm and thus increase
the restricted doubling dimension.

\section{Applications}
\label{sec:apps}

\paragraph{Well-Separated Pair Decomposition}
A pair of net-tree nodes $(u,v)$ is \emph{$\eps$-well-separated} if 
$\max\{\diam_u,\diam_v\}\le \eps \nodedist{u}{v}$, where $\nodedist{u}{v}$
denotes the distance between the representatives of $u$ and $v$. Informally speaking,
all pairs of points $(p,q)$ with $p\in P_u$, $q\in P_v$ have
a similar distance to each other if $(u,v)$ is well-separated.
A \emph{$\eps$-well-separated pair decomposition} ($\eps$-WSPD) is a collection of $\eps$-well-separated
pairs such that for any pair $(p,q)\in P\times P$, there exists a 
well-separated pair $(u,v)$ such that $p\in P_u$ and $q\in P_v$; we say that such a pair $(p,q)$ 
is \emph{covered} by $(u,v)$.

An $\eps$-WSPD of size $n\eps^{-O(\dd)}$ can be computed in time 
$d\left( 2^{O(\dd)}n \log n + n(1/\eps)^{O(\dd) } \right)$~\cite[Sec.5]{hm-fast}. 
A WSPD considers pairs over all scales of distance, just because it has to cover any pair of points.
We relax that condition and only require that all pairs of points in distance at most $\trep$ are covered.
We call the resulting structure \emph{$t$-restricted $\eps$-WSPD}.

We construct the $\trep$-restricted $\eps$-WSPD as follows: We start by constructing the corresponding augmented net forest;
let $u_1,\ldots,u_m$ be its roots. Since we know the $\Rel(\cdot)$ set for any root, we can identify
pairs $(u_i,u_j)$ such that $u_i$ is in $\Rel(u_j)$ and vice versa (this also includes pairs where $u_i=u_j$). 
For any such pair, we call \texttt{genWSPD}$(u_i,u_j)$
from \cite[Sec.5]{hm-fast}, which simply traverses the sub-trees until it finds well-separated pairs. 
We output the union of all pairs generated in this way.

\begin{theorem}
\label{thm:t-wspd}
For $0<\eps<1$ and $t>0$, our algorithm computes a $\trep$-restricted $\eps$-WSPD of size $n\eps^{-O(\tds{7\trep})}$
in expected time
$$NF + dn\eps^{-O(\tds{7t})}.$$
where $NF$ is the complexity for computing the net-forest from Theorem~\ref{thm:lsh-time}.
\end{theorem}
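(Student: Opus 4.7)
The plan is to establish correctness, size, and runtime separately, with each step relying on properties of the augmented net-forest and on the standard \texttt{genWSPD} subroutine from \cite[Sec.5]{hm-fast}.

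For correctness, consider an arbitrary pair $(p,q) \in P \times P$ with $\|p-q\| \le t$. Let $u_i, u_j$ be the roots of the net-forest whose clusters contain $p, q$ (possibly $u_i = u_j$). Because the roots form a $(t,t)$-net and every point is assigned to a net-point within distance $t$, we have $\|p - \rep_{u_i}\| \le t$ and $\|q - \rep_{u_j}\| \le t$, so the triangle inequality yields $\|\rep_{u_i} - \rep_{u_j}\| \le 3t$. With $\tau = 11$, a direct calculation shows $14\tau^{\level(u_i)} = \tfrac{70}{11}t > 3t$, so $u_j \in \Rel(u_i)$ and hence $(u_i, u_j)$ is one of the root pairs passed to \texttt{genWSPD}. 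Correctness of \texttt{genWSPD} then guarantees that the returned collection covers every pair in $P_{u_i} \times P_{u_j}$, and in particular $(p,q)$.

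For size and runtime, I would use the bound $|\Rel(u_i)| \le 14^{\tds{7t}}$ established in Section \ref{sec:net-forest}, so the number of root pairs passed to \texttt{genWSPD} is at most $m \cdot 14^{\tds{7t}} = O(n \cdot 14^{\tds{7t}})$. Next, invoke the analysis of \texttt{genWSPD} from \cite[Sec.5]{hm-fast}: on a single pair of net-trees rooted at scale $t$, it outputs at most $(n_i + n_j)(1/\eps)^{O(\tds{O(t)})}$ well-separated pairs in time $d$ times that quantity. The key substitution is that every node touched by \texttt{genWSPD} lies at level at most $\level(u_i)$, so all packing arguments only involve balls of radius $O(t)$ and one can replace the global doubling dimension by $\tds{O(t)} \le \tds{7t}$ throughout. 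A proper amortization over root pairs gives total size $n \eps^{-O(\tds{7t})}$; multiplying by $d$ for distance evaluations yields the second term in the runtime, and the first term $NF$ is simply the cost of building the augmented net-forest via Theorem \ref{thm:lsh-time}.

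The main obstacle is the amortization. A naive multiplication of (number of root pairs) by (per-pair size) would inflate the bound by an extra $14^{\tds{7t}}$ factor. The fix is to charge each net-forest node directly: a node $v$ in the subtree of $u_i$ participates only in \texttt{genWSPD} invocations whose second argument lies in $\Rel(u_i)$, and the standard \texttt{genWSPD} analysis bounds its total contribution across all such invocations by $(1/\eps)^{O(\tds{7t})}$ using packing arguments inside balls of radius $O(t)$ around $\rep_v$. Summing over the at most $n$ nodes of the net-forest then yields the claimed size bound.
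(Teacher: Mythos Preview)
Your proposal is correct and follows essentially the same approach as the paper's proof: correctness via the triangle inequality to show neighboring roots lie in each other's $\Rel$ sets (you get $3t$ where the paper uses a looser $5t$), and size/runtime by invoking the charging argument of \cite[Sec.5]{hm-fast} with the observation that all packing arguments involve balls of radius $O(t)$ so that $\dd$ can be replaced by $\tds{7t}$. Your explicit discussion of the amortization over root pairs is a useful elaboration of what the paper compresses into a one-line citation, but it is the same argument.
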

\begin{proof}
For correctness, any pair of nodes generated is $\eps$-well-separated by definition.
For the relaxed covering property, consider a pair $(p,q)$ of points in distance at most $t$. There are roots
$u_1$, $u_2$ in the net-forest with $p\in P_{u_1}$ and $q\in P_{u_2}$. Since the diameter of $u_1$ and $u_2$
is at most $2t$, the distance of $\rep_{u_1}$ and $\rep_{u_2}$ is at most $5t\leq 14\tau^{\level(u_i)}$.
Therefore, $u_2\in\Rel(u_1)$ (and vice versa), and there will be a pair generated that covers $(p,q)$. 

For the size bound, we can use the same charging argument as in~\cite[Sec.5]{hm-fast}. We can additionally
ensure by our construction that in all doubling arguments, the radius of the balls in question is at most $7t$
and therefore replace the doubling dimension by $\tds{7t}$ in the bound. The running time follows because
the number of recursive calls of \texttt{genWSPD} is proportional to the output size, and we spend $O(d)$
time per recursion step.
\end{proof}

\paragraph{Well Separated Simplical Decomposition}
The concept of well-separated simplical decomposition (WSSDs) of point sets, introduced by 
Kerber and Sharathkumar~\cite{kbr-cech} and extended to doubling spaces by Choudhary et al~\cite{cks-approximate},
generalizes the concept of WSPD to larger tuples. 
A $(\kapa+1)$-tuple $(v_0,v_1,\ldots,v_k)$ is called \emph{$\eps$-well separated} if 
each $v_i$ is a node of the net-tree and for any ball $\ball$ which contains at least one point
 of each $v_i$, it holds that
$$v_0\cup v_1\cup ....\cup v_k\subseteq(1+\eps) \ball$$
where $(1+\eps) \ball$ is the ball with same center as $\ball$ and radius multiplied by $(1+\eps)$. 
An $(\eps,\kapa)$-WSSD is a set of $\eps$-well-separated tuples of size up to $\kapa+1$
such that any $\kapa$-simplex is covered by some tuple.
In~\cite{cks-approximate}, an $(\eps,\kapa)$-WSSD of size $n(2/\eps)^{O(\dd \cdot \kapa)}$
is constructed in time $d\left( 2^{O(\dd)}n\log n+n(2/\eps)^{O(\dd \cdot \kapa)} \right)$.

Similar as before, we define a \emph{$\trep$-restricted $(\eps,\kapa)$-WSSD} to be a collection
of $\eps$-well-separated tuples such that each $\kapa$-simplex that fits into a ball of radius $t$
is covered by a tuple. The statement is equivalent to the condition that the radius
of the smallest minimum enclosing ball containing points from each node of the tuple is at most $t$.

\begin{theorem}
\label{thm:t-wssd-full}
 A \emph{$\trep$-restricted $(\eps,\kapa)$-WSSD} of size $n(\frac{2}{\eps})^{O(\tds{7\trep}\cdot\kapa)}$
 can be computed in time 
 $$NF+nd(\frac{2}{\eps})^{O(\tds{7\trep}\cdot\kapa)},$$
 where $NF$ is the complexity for computing the net-forest from Theorem~\ref{thm:lsh-time}.
 Within the same time bound, 
 we can construct a sequence of approximation complexes $(\Approx_\alpha)_{\alpha\in[0,\trep]}$
 of size $n(\frac{2}{\eps})^{O(\tds{7\trep}\cdot\kapa)}$ whose persistence module
 is an $(1+\eps)$-approximation (in the sense that the two modules are \emph{interleaved}~\cite{chazal})
of the \emph{truncated \Cech module} $(\mathcal{C}_\alpha)_{\alpha\in[0,\trep]}$.
 
\end{theorem}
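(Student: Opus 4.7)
The plan is to mirror the proof of Theorem~\ref{thm:t-wspd} but on tuples rather than pairs, invoking the WSSD construction of~\cite{cks-approximate} locally inside the net-forest. First I would build the augmented net-forest in time $NF$; let $u_1,\ldots,u_m$ be its roots. I then enumerate all $(\kapa+1)$-tuples of roots $(u_{i_0},\ldots,u_{i_\kapa})$ whose representatives are pairwise close, more precisely such that $u_{i_j}\in\Rel(u_{i_0})$ for every $j$ (with repetitions allowed so that tuples lying in a single subtree are included). For each such seed tuple I would invoke the tuple-generation routine of~\cite{cks-approximate}: descend into the children of the nodes of the current tuple until the whole tuple becomes $\eps$-well-separated, spending $O(d)$ per recursive call.

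For the relaxed covering property, fix a $\kapa$-simplex $\sigma=\{p_0,\ldots,p_\kapa\}$ that fits into a ball of radius $t$. Each $p_j$ lies in some root cluster $P_{u_{i_j}}$, and since each cluster has radius at most $\frac{2\tau}{\tau-1}\tau^{\level(u_{i_j})}\le t$, the distance $\distance{\rep_{u_{i_0}}}{\rep_{u_{i_j}}}$ is at most $5t\le 14\tau^{\level(u_{i_0})}$, so $u_{i_j}\in\Rel(u_{i_0})$. Hence the seed tuple $(u_{i_0},\ldots,u_{i_\kapa})$ is enumerated, and the recursion of~\cite{cks-approximate} produces a well-separated tuple covering $\sigma$. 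For the size bound I would reuse the charging argument of~\cite{cks-approximate}: every doubling step performed during the recursion involves a ball of radius at most $7t$ (because the enumeration is seeded from $\Rel$-pairs of roots and recursion only goes down), so every occurrence of $\dd$ in that analysis can be replaced by $\tds{7\trep}$, yielding the output size $n(2/\eps)^{O(\tds{7\trep}\cdot\kapa)}$ and the corresponding running time $nd(2/\eps)^{O(\tds{7\trep}\cdot\kapa)}$.

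For the approximation complexes, I would reuse the construction of~\cite{cks-approximate}: each well-separated tuple in the WSSD gives rise to a simplex in $\Approx_\alpha$ exactly when $\alpha$ lies in a certain birth interval associated with the minimum enclosing ball of representatives. Restricting attention to tuples with birth time at most $\trep$ and intersecting the construction with the parameter interval $[0,\trep]$ yields a filtration $(\Approx_\alpha)_{\alpha\in[0,\trep]}$. The interleaving argument of~\cite{cks-approximate} compares $\Approx_\alpha$ with the \Cech complex $\mathcal{C}_\alpha$ via an $(1+\eps)$-factor on the enclosing ball radius; since the coverage property we proved above applies to every simplex whose minimum enclosing ball has radius at most $\trep$, the same chain maps establish an $(1+\eps)$-interleaving between $(\Approx_\alpha)_{\alpha\in[0,\trep]}$ and the truncated module $(\mathcal{C}_\alpha)_{\alpha\in[0,\trep]}$.

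The main obstacle is making sure that the doubling arguments inside the recursion of~\cite{cks-approximate} really stay within balls of radius $O(\trep)$ so that $\dd$ can be swapped for $\tds{7\trep}$ cleanly; near the top of the forest the recursion might otherwise want to consult nodes at scales larger than $\trep$, which do not exist in a net-forest. This is handled by seeding the recursion from the $\Rel$-sets of the roots (rather than from a common ancestor, as in the net-tree case) and observing that every further recursion only descends, keeping all consulted balls at radius at most $7\trep$. The remaining bookkeeping for the truncation of the filtration at $\trep$ is straightforward once the WSSD has been constructed.
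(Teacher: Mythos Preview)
Your proposal diverges from the paper in a way that leaves a real gap. The routine of~\cite{cks-approximate} is \emph{not} a ``descend from a seed tuple until well-separated'' recursion; it is incremental. One first builds $\Gamma_1$ as an $\eps/2$-WSPD, and to pass from $\Gamma_{\kapa-1}$ to $\Gamma_{\kapa}$ one takes each $\gamma=(v_0,\ldots,v_{\kapa-1})$, computes an approximate minimum enclosing ball, \emph{ascends} to an ancestor $a_0$ of $v_0$ at a scale tied to $\mebrad(\gamma)$, and then descends from the nodes in $\Rel(a_0)$ to pick the new element $x$. The size bound $n(2/\eps)^{O(\dd\cdot\kapa)}$ and its charging argument are specific to this scheme (the level of $x$ is controlled by $\mebrad(\gamma)$), so you cannot simply quote them for a top-down splitting recursion seeded at root tuples. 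Your algorithm may be salvageable, but it is a different algorithm and needs its own size analysis; the sentence ``reuse the charging argument of~\cite{cks-approximate}'' does not yet do that work.

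The paper instead keeps the incremental scheme intact and handles exactly the obstacle you flag: in a net-forest the ascent to $a_0$ may overshoot the root. The fix is to cap the ascent at the root of $v_0$'s tree and use that root's $\Rel$-set; the correctness lemma then verifies, via a short triangle-inequality computation, that for any $\kapa$-simplex $\sigma$ with $\mebrad(\sigma)\le\trep$ the root of the tree containing the extra vertex $m_\kapa$ lies in $\Rel$ of the capped ancestor. With this single modification the size and time analysis of~\cite{cks-approximate} transfers verbatim with $\dd$ replaced by $\tds{7\trep}$, since every $\Rel$-set consulted sits at or below the root level. For the \Cech filtration there is one further boundary check you omit: the $\vcell(\cdot,h)$ operator of~\cite{cks-approximate} must never request a node above a root, and the paper verifies this from $\alpha\le\trep$ and $\eps\le 1$, which force $h<\level(u)$ for every root $u$.
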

We defer the description of the algorithm to construct the \emph{$\trep$-restricted $(\eps,\kapa)$-WSSD} 
and the proof of Theorem~\ref{thm:t-wssd-full} to Appendix~\ref{app:a}.

\paragraph{Approximating the $\trep$-doubling dimension} One can approximate $\td$ 
for any point set $P$ up to a constant factor by constructing a net-forest $\tree$ of scale $\trep$ over $P$.
Let $x$ denote the maximum out-degree of any node in $\tree$. 
Then $\log x$ is a constant approximation of $\td$. This follows from the arguments of~\cite[Sec.9]{hm-fast}.

\section{Conclusion and future work}
\label{sec:conclusion}
In this paper we presented an algorithm to construct a hierarchical net-forest up to a certain scale and
applied it to the construction of WSPDs and approximate \Cech complexes.
One possible optimization we have ignored in our analysis is that the packing
arguments we use are for the complete point set. However, during the $\Rel(\cdot)$ construction, we work with the 
net-points which satisfy certain packing properties. Since these constitute a subset of the original point set, they 
may have an even lower doubling dimensions which we could exploit. 
Finding more applications for the $\trep$-restricted doubling dimension is another 
direction which we would like to look into.

\paragraph{Acknowledgements}
This research is supported by the Max Planck Center for Visual Computing and Communication.

\appendix
\section{Proof of Theorem~\ref{thm:t-wssd-full}}
\label{app:a}

\paragraph{Construction of the $\trep$-restricted WSSD}
 We describe the algorithm to construct the $\trep$-restricted $(\eps,\kapa)$-WSSD
 and prove its correctness and runtime. In this appendix, we will heavily rely
 on the notations, algorithms, and results presented in~\cite{cks-approximate}.
 The algorithm proceeds iteratively; for $\kapa=1$, we construct a $(2\trep)$-restricted $\eps/2$-WSPD using the algorithm
 from Section~\ref{sec:apps}. 
 To construct $\Gamma_{\kapa+1}$ from $\Gamma_{\kapa}$, we iterate over the tuples $\gamma\in \Gamma_{\kapa}$.
 We use the scheme of~\cite[Sec.3]{cks-approximate}, computing an 
 approximate meb of $\gamma$ and then exploring ancestors of $v_0$ and their descendants at appropriate levels. 
 The only complication arises when the algorithm requests for an ancestor higher than root of the tree of $v_0$.
 In such a case, our algorithm
 uses the root as the ancestor. In the following lemma, we will show that with this approach, we still cover
 all simplices with meb radius of at most $t$.
 \begin{lemma}
  \label{lem:t-wssd-correct}
  The algorithm computes a $\trep$-restricted $(\eps,\kapa)$-WSSD.
 \end{lemma}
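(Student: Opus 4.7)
The proof naturally decomposes into two parts: showing that each output tuple is $\eps$-well-separated, and showing the restricted covering condition that every $\kapa$-simplex of $\meb$-radius at most $\trep$ is covered. The first part is inherited directly from~\cite{cks-approximate}, since the predicate that defines a well-separated tuple is a local condition on the representatives and diameters of the selected nodes and is unaffected by the truncation of the net-tree above the roots of the net-forest.

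For the covering property I would proceed by induction on $\kapa$. The base case $\kapa=1$ follows from Theorem~\ref{thm:t-wspd} applied with parameters $2\trep$ and $\eps/2$: the two endpoints of any $1$-simplex of $\meb$-radius at most $\trep$ are within distance $2\trep$ and hence are covered by the $(2\trep)$-restricted $(\eps/2)$-WSPD produced in the first step of the construction.

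For the inductive step, fix a $(\kapa+1)$-simplex $\sigma$ of $\meb$-radius at most $\trep$. Dropping any vertex gives a $\kapa$-simplex of $\meb$-radius at most $\trep$ which, by the inductive hypothesis, is covered by some tuple $\gamma = (v_0,\ldots,v_\kapa)\in\Gamma_\kapa$. The scheme of~\cite[Sec.3]{cks-approximate} would now compute an approximate $\meb$ of $\gamma$, climb to an ancestor $w$ of $v_0$ at an appropriate level, and search descendants of nodes in $\Rel(w)$ for the remaining vertex $p_{\kapa+1}$. The only deviation in our algorithm occurs when $w$ would lie strictly above the root $u^*$ of $v_0$'s tree, in which case $u^*$ is used in place of $w$. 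I would show that under the $\meb$-radius bound this substitution still locates $p_{\kapa+1}$: by the forest covering property, $p_{\kapa+1}$ lies in the cluster of some root $u'$, so $\distance{\rep_{u^*}}{p_0},\distance{\rep_{u'}}{p_{\kapa+1}}\le\trep$ while $\distance{p_0}{p_{\kapa+1}}\le 2\trep$; a triangle inequality then gives $\distance{\rep_{u^*}}{\rep_{u'}}\le 4\trep\le 14\tau^{\level(u^*)}$, placing $u'\in\Rel(u^*)$ by the augmentation from Section~\ref{sec:net-forest}. The descendant traversal launched from $u^*$ therefore reaches a node $v_{\kapa+1}$ whose point set contains $p_{\kapa+1}$, and the resulting $(\kapa+1)$-tuple covers $\sigma$.

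The main technical obstacle is to confirm that the well-separation guarantee from~\cite{cks-approximate} is preserved by this root substitution. Here the $\meb$-radius bound does the work: the ancestor level selected by their scheme is proportional to $\log_\tau(\meb/\eps)$, so for $\meb\le\trep$ the requested level exceeds the root level by at most a constant depending on $\eps$. Consequently the descendants explored from $u^*$ sit at essentially the same levels that~\cite{cks-approximate} would have explored from $w$, and the well-separation estimates carry through once their analysis is re-read on the sub-tree below $u^*$ rather than on the full net-tree. Working this substitution through every recursive step of their argument is the only routine-but-delicate part of the proof.
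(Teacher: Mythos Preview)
Your proposal follows the same inductive skeleton as the paper's proof: base case via Theorem~\ref{thm:t-wspd}, inductive step by dropping a vertex to a covered sub-face, and the key root-substitution case handled by the triangle inequality yielding $\distance{\rep_{u^*}}{\rep_{u'}}\le 4\trep$, hence $u'\in\Rel(u^*)$. That part matches the paper exactly.

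One imprecision is worth tightening. You write ``dropping any vertex gives a $\kapa$-simplex of $\meb$-radius at most $\trep$'', which is true, but for the case where the requested ancestor \emph{does} lie inside the tree you then defer to~\cite{cks-approximate}. Their correctness argument (Lem.~12) relies on the specific vertex singled out by their Lem.~9, namely one lying in $2\meb(\sigma')$; an arbitrarily dropped vertex need not be found by the descendant search at the level dictated by $\meb(\gamma)$. The paper's proof makes this choice explicit. Your root-case triangle inequality happens to work for any dropped vertex, but the within-tree case does not, so you should invoke Lem.~9 of~\cite{cks-approximate} rather than drop arbitrarily.

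Your final paragraph on well-separation goes beyond what the paper argues (the paper's proof addresses only covering and leaves well-separation implicit in the reuse of~\cite{cks-approximate}'s scheme). Your instinct that something needs checking here is reasonable, but the concern is slightly misplaced: well-separation of an output tuple $(\gamma,v)$ is governed by the \emph{descendant} level at which $v$ is selected, and that level is set from $\meb(\gamma)$, not from the ancestor level. Substituting the root for a higher ancestor only enlarges the search region; it does not alter the level of the descendants returned, so the well-separation estimates from~\cite{cks-approximate} apply unchanged.
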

 \begin{proof}
We show by induction that with modified ancestor search, we still cover all simplices with meb radius at most $\trep$.
For $\kapa=1$, the correctness of the algorithm follows from Theorem~\ref{thm:t-wspd} in Section~\ref{sec:apps},
Let $\Gamma_{\kapa-1}$ cover all $(\kapa-1)$-simplices $\gamma$ which satisfy 
 $\mebrad(\gamma)\le \trep$. Consider any $\kapa$-simplex $\sigma=(m_0,\ldots,m_{\kapa})$ with $\mebrad(\sigma)\le \trep$.
 From~\cite[Lem.9]{cks-approximate}, there exists a point (say $m_{\kapa}$) such that $m_{\kapa}\in2\meb(\sigma')$
 where $\sigma':=\sigma\setminus \{m_{\kapa}\}$ and $2\meb(\sigma')$ represents a ball with twice the radius and the same
 center as $\meb(\sigma')$. Since $\sigma'$ is a $(\kapa-1)$-simplex and $\mebrad(\sigma')\leq\mebrad(\sigma)\leq\trep$, 
it is covered by some $\kapa$-tuple 
 $\gamma=(v_0,\ldots,v_{\kapa-1})\in\Gamma_{\kapa-1}$.
To prove correctness, we show that when our algorithm reaches tuple $\gamma$, it produces a $(\kapa+1)$-tuple $(\gamma,x)$ 
such that $m_{\kapa}\in P_x$
 which implies that the simplex $\sigma$ is covered by the $(\kapa+1)$-tuple $(\gamma,x)$.
 
When handling $\gamma$, the algorithm searches for an ancestor of $v_0$ at an appropriate scale. 
If this ancestor is found within the tree
of $v_0$ in the net-forest, the arguments from~\cite[Lem.12]{cks-approximate} carry over to ensure 
that a suitable $x$ is found.
So let us assume that the algorithm chooses the root of the tree of the net-forest that $v_0$ lies in.
Call that root node $a_0$.
The algorithm considers all nodes in $\Rel(a_0)$ and creates new tuples with their descendants.
Moreover, the net-forest contains a leaf representing the point $m_\kapa$; let $a'$ denote the root of its tree.
It suffices to show that $a'\in\Rel(a_0)$. Since $\mebrad(\sigma)\leq\trep$, the distance of $m_0$ and $m_k$ 
is at most $2\trep$.
Moreover, the distance of $m_0$ to $\rep_{a_0}$ is at most $t$, because the representatives of the roots form 
a $(t,t)$-net. The same holds
for $m_\kapa$ and $a'$. Using triangle inequality, the distance of $\rep_{a_0}$ and $\rep_{a'}$ is at most $4t$. 
This implies that $a'\in\Rel(a_0)$.
\end{proof}

\begin{lemma}
\label{lem:t-wssd-size}
 The size of the computed $\trep$-restricted $(\eps,\kapa)$-WSSD $\Gamma_{\kapa}$ is 
 $n(\frac{2}{\eps})^{O(\tds{7\trep}\cdot\kapa)}$.
\end{lemma}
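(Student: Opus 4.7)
The plan is to prove the size bound by induction on $\kapa$, mirroring the analysis of~\cite[Sec.3]{cks-approximate} but keeping track of the fact that every packing argument in the construction involves balls of radius $O(\trep)$, so each appearance of the unrestricted doubling dimension $\dd$ can be replaced by $\tds{7\trep}$.

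For the base case $\kapa=1$, the set $\Gamma_1$ is exactly a $(2\trep)$-restricted $(\eps/2)$-WSPD, so by Theorem~\ref{thm:t-wspd} its size is $n\eps^{-O(\tds{14\trep})}=n(2/\eps)^{O(\tds{7\trep})}$ (absorbing constant factors on $\trep$ into the $O(\cdot)$, since $\td$ is monotone nondecreasing in $\trep$). For the inductive step, I would bound the \emph{branching factor} of the construction: starting from a tuple $\gamma=(v_0,\ldots,v_{\kapa-1})\in\Gamma_{\kapa-1}$, how many new tuples $(\gamma,x)$ may be produced when extending to $\Gamma_\kapa$? The construction of~\cite{cks-approximate} selects an ancestor $a$ of $v_0$ whose scale is determined by the approximate meb radius of $\gamma$, then enumerates the descendants, at an appropriate level, of all nodes in $\Rel(a)$. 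I would argue that the number of such descendants is bounded by $(2/\eps)^{O(\tds{7\trep})}$: $|\Rel(a)|$ is bounded by a packing argument at scale $\tau^{\level(a)}\leq\trep$, and the descendants being enumerated lie inside balls of radius $O(\trep)$ as well, so the doubling-dimension blow-up for the refinement from scale $\tau^{\level(a)}$ down to scale $\eps\cdot\mebrad(\gamma)$ contributes the $(2/\eps)^{O(\tds{7\trep})}$ factor.

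The only new case compared to~\cite{cks-approximate} is when the algorithm hits the net-forest root $a_0$ instead of an ancestor one would find in the full net-tree. In this case, the same bound still applies: $\Rel(a_0)$ has cardinality at most $14^{\tds{7\trep}}$ by the argument in Section~\ref{sec:net-forest}, and the descendants of these roots that could be relevant all lie within balls of radius $O(\trep)$ centered at $\rep_{a_0}$ (this follows because we only extend tuples whose meb radius is at most $\trep$, which Lemma~\ref{lem:t-wssd-correct} uses implicitly). So the branching factor remains $(2/\eps)^{O(\tds{7\trep})}$ in this case too. Combining with the induction hypothesis $|\Gamma_{\kapa-1}|\leq n(2/\eps)^{O(\tds{7\trep}\cdot(\kapa-1))}$, we get $|\Gamma_\kapa|\leq n(2/\eps)^{O(\tds{7\trep}\cdot\kapa)}$.

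The main obstacle is bookkeeping the packing argument carefully: one has to verify that in every place where the original proof in~\cite{cks-approximate} invokes the doubling dimension, the ball under consideration has radius at most a constant times $\trep$, so that replacing $\dd$ by $\tds{7\trep}$ is justified. The nontrivial sub-case is the ancestor-truncation step described above, where one must check that the implicit ``virtual'' ancestor one would have queried in the full net-tree was already at scale $\Theta(\trep)$, so that using the root $a_0$ plus its $\Rel$ set gives the same coverage with only a constant blow-up in the packing bound. All other appearances of the doubling dimension in~\cite{cks-approximate} occur at scales strictly below the ancestor scale and so are automatically bounded by $\tds{7\trep}$.
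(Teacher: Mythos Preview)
Your approach---induction on $\kapa$ with a branching-factor bound obtained by replacing $\dd$ with $\tds{7\trep}$ in the packing arguments of~\cite[Lem.~13]{cks-approximate}---is exactly the paper's approach; you are in fact more explicit than the paper in handling the root-truncation case, which the paper subsumes into ``carries over directly.'' One minor slip: monotonicity of $\tds{\cdot}$ in $\trep$ does \emph{not} let you absorb $\tds{14\trep}$ into $O(\tds{7\trep})$ (the inequality goes the wrong way), so the base case as you wrote it only yields a bound in terms of $\tds{14\trep}$---but the paper glosses over this same constant, and it does not affect the shape of the argument.
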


\begin{proof}
The proof of~\cite[Lem.13]{cks-approximate} carries over directly~-- indeed, we can replace 
all occurrences of $\dd$ by $\tds{7\trep}$.
This comes from the fact that a node $u$ has at most $14^{\tds{7\trep}}$ nodes in $\Rel(u)$,
and for any node in $\Rel(u)$ we reach descendants of a  level of at most $O(\log(2/\eps))$ 
smaller then $u$ (see the proof of~\cite[Lem.13]{cks-approximate}
for details). Since every node in the net-forest has at most $2^{O(\tds{\trep})}$ children, we create at most
$$14^{\tds{7\trep}}(\frac{2}{\eps})^{O(\tds{\trep})}=(\frac{2}{\eps})^{O(\tds{7\trep})}$$
tuples in $\Gamma_{\kapa}$ from a tuple in $\Gamma_{\kapa-1}$. With that, the bound can be proved by induction.
\end{proof}

\begin{lemma}
\label{lem:t-wssd-time}
Computing a $\trep$-restricted $(\eps)$-WSSD takes expected time 
$$NF+nd(2/\eps)^{O(\tds{7\trep}\cdot\kapa)}$$
where $NF$ is the complexity for computing the net-forest from Theorem~\ref{thm:lsh-time}.
\end{lemma}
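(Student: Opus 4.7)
The plan is to adapt the runtime analysis of~\cite[Sec.3]{cks-approximate} essentially verbatim, replacing the global doubling dimension $\dd$ by $\tds{7\trep}$, and to show that the modified ancestor search (that clips to the root of a tree in the net-forest) does not increase the running time beyond the output size bound of Lemma~\ref{lem:t-wssd-size}.

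First, I would account for the net-forest: it is built exactly once at a cost of $NF$ from Theorem~\ref{thm:lsh-time}, and the augmentation gives us $\Rel(u)$ for every node $u$ in the forest, which is the only geometric information the subsequent iterative construction needs. Next I would dispose of the base case: for $\kapa=1$, invoke Theorem~\ref{thm:t-wspd} to build a $(2\trep)$-restricted $\eps/2$-WSPD in time $NF + dn\eps^{-O(\tds{7\trep})}$, which is absorbed in the claimed bound.

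For the inductive step I would follow the standard WSSD accounting: the algorithm iterates over each tuple $\gamma\in\Gamma_{\kapa-1}$ and for each does a constant-depth traversal whose purpose is to generate the children-tuples that populate $\Gamma_{\kapa}$. The per-tuple work breaks into (i) computing an approximate minimum enclosing ball of $\gamma$, which by~\cite{bc-smaller} costs $O(d\cdot \mathrm{poly}(1/\eps,\kapa))$, (ii) locating the appropriate ancestor $a_0$ of $v_0$ (walking up at most $O(\log(2/\eps))$ levels, or stopping at the root of the relevant net-forest tree as described above Lemma~\ref{lem:t-wssd-correct}), and (iii) iterating through $\Rel(a_0)$ and the descendants of its members at the prescribed levels to emit the new tuples. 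By the packing argument already used in the proof of Lemma~\ref{lem:t-wssd-size}, steps (ii) and (iii) together touch at most $(2/\eps)^{O(\tds{7\trep})}$ nodes, and each touch costs $O(d)$ (for the distance tests used when filtering nodes against the approximate meb). Thus the per-tuple cost is $d(2/\eps)^{O(\tds{7\trep})}$.

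Multiplying this per-tuple cost by the size bound $|\Gamma_{\kapa-1}|\le n(2/\eps)^{O(\tds{7\trep}\cdot(\kapa-1))}$ from Lemma~\ref{lem:t-wssd-size}, summing the geometric series over the $\kapa$ levels of the iterative construction, and adding the one-time cost $NF$ for the net-forest gives the claimed bound $NF + nd(2/\eps)^{O(\tds{7\trep}\cdot\kapa)}$. The main delicate point I would expect is step (ii): I must verify that clipping the ancestor search at a root $a_0$ does not cause the algorithm to explore more than $(2/\eps)^{O(\tds{7\trep})}$ nodes per input tuple. This follows because $\Rel(a_0)$ has size at most $14^{\tds{7\trep}}$ and from any node in it we descend only $O(\log(2/\eps))$ net-forest levels, where each level multiplies the count by at most $2^{O(\tds{\trep})}$, exactly as in Lemma~\ref{lem:t-wssd-size}.
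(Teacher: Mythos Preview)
Your proposal is correct and follows essentially the same approach as the paper: the paper's own proof is a one-liner that simply says the argument is analogous to~\cite[Lem.14]{cks-approximate}, plugging in the $\trep$-restricted WSPD bound from Theorem~\ref{thm:t-wspd} and the size bound from Lemma~\ref{lem:t-wssd-size}. You have merely unpacked that analogy---accounting separately for the net-forest cost, the base case via the restricted WSPD, the per-tuple work (approximate meb, ancestor walk clipped at the root, traversal through $\Rel$ and bounded-depth descendants), and the geometric sum over levels---which is exactly what the cited lemma does with $\dd$ replaced by $\tds{7\trep}$.
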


\begin{proof}
The proof is analogous to~\cite[Lem.14]{cks-approximate}, plugging in the 
running time for $t$-restricted $\eps$-WSPD from Theorem~\ref{thm:t-wssd-full} 
and the size bound from Lemma~\ref{lem:t-wssd-size}.
\end{proof}

\paragraph{Computing the approximate \Cech filtration}
We use the scheme of~\cite[Sec.4]{cks-approximate} to construct the $(1+\eps)$-approximate filtration on the
$\trep$-restricted WSSD. The original construction works without modification. 
Using the notation from~\cite[Sec.4]{cks-approximate}.,
for any WST $\sigma=(v_0,v_1,\ldots,v_{\kapa})$ with $\level(v_i)\le\hval$, 
we add $\sigma'=(\vcell(v_0,\hval),\vcell(v_1,\hval),\ldots,\vcell(v_{\kapa},\hval))$ to $\Approx_\alpha$ if 
$\mebrad(\sigma')\le\theta_{\disc}$.
The only potential problem with the $\trep$-restricted case is that such a $\vcell()$ might be a node higher than 
a root of the net-forest. This cannot happen, however, since $\hval$ is chosen such that
$$\frac{2\tau}{\tau-1}\tau^h\leq \frac{\eps}{7}\alpha.$$
Since $\alpha\leq \trep$ and $\eps\leq 1$, we have that 
$$\hval < \lfloor\log_\tau\frac{\tau-1}{2\tau}t\rfloor=\level(u)$$
for any root $u$ in the net-forest.
\end{document}